\author{Paul \textsc{Poncet}}
\address{CMAP, \'{E}cole Polytechnique, Route de Saclay, 91128 Palaiseau Cedex, France \\
and INRIA, Saclay--\^{I}le-de-France}
\email{poncet@cmap.polytechnique.fr}
\DeclareMathOperator*{\op}{op}
\newtheorem{theorem}{Theorem}[section]
\newtheorem{proposition}[theorem]{Proposition}
\newtheorem{lemma}[theorem]{Lemma}
\theoremstyle{definition}
\newtheorem{remark}[theorem]{Remark}
\newtheorem{problem}[theorem]{Problem}
\begin{document}

\title{Pruning a poset with veins}

\date{\today}

\subjclass[2010]{06A05, 
                 06A06} 

\keywords{abstract connectivity, posets, chains, irreducible chains, order convexity, veins, pruning, irreducible elements, doubly-irreducible elements}

\begin{abstract}
We recall some abstract connectivity concepts, and apply them to special chains in partially ordered sets, called veins, that are defined as order-convex chains that are contained in every maximal chain they meet. 
Veins enable us to define a new partial order on the same underlying set, called the pruning order. The associated pruned poset is simpler than the initial poset, but irreducible, coirreducible, and doubly-irreducible elements are preserved by the operation of pruning. 
\end{abstract}

\maketitle
 


\section{Introduction}

While every finite semilattice is generated by its irreducible elements, a finite lattice is not always generated by its doubly-irreducible elements. 
For instance, the power set $P_3$ of $\{1, 2, 3\}$ ordered by inclusion is a finite (distributive) lattice with no doubly-irreducible element. 
It turns out that $P_3$ is not a planar lattice, and we know that every finite planar lattice has at least one doubly-irreducible element. 
In the same line, different results were proved that assert the existence of one or several such elements for special types of posets. 

But we need a theorem due to Monjardet and Wille \cite{Monjardet89} augmented by Ern\'e \cite{Erne91} to get necessary and sufficient conditions on a finite distributive lattice to be generated by its doubly-irreducible elements. 

However, the conditions provided by this theorem are related to the normal completion of the lattice at stake, hence seem hardly operational. 
Here we propose a way to ``prune'' a finite poset; this means that we define a new partial order on the same underlying set, called the pruning order. The associated pruned poset is simpler than the initial poset, but irreducible, coirreducible, and doubly-irreducible elements are preserved by the operation of pruning. 
This pruning operation is based on the notion of \textit{vein}, which is an order-convex chain contained in every maximal chain it meets.




\section{Short preliminaries on connectivities}


Here we recall the axiomatic concept of connectivity, which nicely generalizes the corresponding notions used in topological spaces or graphs. 
This will offer an appropriate framework for the study of \textit{veins} in the next section. 
A \textit{connectivity} on a set $E$ is a nonempty collection $\mathrsfs{C}$ of subsets of $E$ covering $E$ and satisfying 
$$
\bigcap \mathrsfs{A} \neq \emptyset \Rightarrow \bigcup \mathrsfs{A} \in \mathrsfs{C}, 
$$
for all subsets $\mathrsfs{A} \subset \mathrsfs{C}$. The elements of $\mathrsfs{C}$ are the \textit{connected} subsets of $E$, and $(E,\mathrsfs{C})$ is called a \textit{connectivity space}. The space is \textit{point-connected} if all singletons are connected. All connectivities considered here will be point-connected.  
The \textit{connected components} of a connectivity space $E$ are the maximal connected subsets. 


We owe this axiomatisation to B\"orger \cite{Boerger84}. Matheron and Serra \cite{Matheron88} and Serra \cite{Serra88, Serra98}, interested in 
applications to mathematical morphology and image analysis, rediscovered this concept, and their work was pursued by Ronse \cite{Ronse98} and Braga-Neto and Goutsias \cite{BragaNeto00, BragaNeto02} among others, for similar purposes. 
At the same time, analogous work arising from order-theoretic interests was developed by Richmond and Vainio \cite{Richmond90} and Ern\'e and Vainio \cite{Erne90b}. 

\section{Irreducible chains in partially ordered sets}

\subsection{Irreducible chains}

A \textit{partially ordered set} or \textit{poset} $(P,\leqslant)$ is a set $P$ equipped with a reflexive, transitive, and antisymmetric binary relation $\leqslant$. 
A nonempty subset $C$ of $P$ is a \textit{chain} (or a \textit{totally ordered subset}) if, for all $x, y \in C$, $x \leqslant y$ or $y \leqslant x$. A chain $M$ is \textit{maximal} if $C \supset M$ implies $C = M$, for all chains $C$ in $P$. 

We call \textit{irreducible} a chain $C$ such that, for all maximal chains $M$, 
$$
C \cap M \neq \emptyset \Longrightarrow C \subset M. 
$$ 
Note that every nonempty subset of an irreducible chain is an irreducible chain. 
The next proposition gives a characterization. 

\begin{proposition}
A chain $C$ is irreducible if and only if, for all nonempty finite (resp.\ arbibrary) families of maximal chains covering $C$, one of them contains $C$. 
\end{proposition}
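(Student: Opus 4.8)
The plan is to prove the two implications separately, observing that in each direction a single argument covers the finite and the arbitrary versions at once.

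For the forward implication, suppose $C$ is irreducible and let $\mathrsfs{F}$ be any nonempty family of maximal chains with $C \subset \bigcup \mathrsfs{F}$. Since a chain is nonempty by definition, fix some $c \in C$; as $C \subset \bigcup \mathrsfs{F}$ there is $M \in \mathrsfs{F}$ with $c \in M$, so $C \cap M \neq \emptyset$, and irreducibility of $C$ gives $C \subset M$. This uses nothing about the size of $\mathrsfs{F}$, so it settles both the finite and the arbitrary statement simultaneously.

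For the converse I would argue by contraposition: assuming $C$ is \emph{not} irreducible, I aim to exhibit a two-element (hence finite) family of maximal chains that covers $C$ yet has no member containing $C$; being finite, such a family also witnesses the failure in the arbitrary setting. By hypothesis there is a maximal chain $M$ with $C \cap M \neq \emptyset$ and $C \not\subset M$. Choose $x \in C \cap M$ and $y \in C \setminus M$; since $C$ is a chain and $x \neq y$, we may assume $x < y$ (the case $y < x$ being symmetric). The set $C \setminus M$ is a nonempty chain, so by the Hausdorff maximality principle it is contained in some maximal chain $M'$. Then $C = (C \cap M) \cup (C \setminus M) \subset M \cup M'$, so $\{M, M'\}$ covers $C$, and $M$ does not contain $C$ because $y \notin M$.

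The main obstacle is to ensure that $M'$ can be chosen so as not to contain $C$ either, that is, so that the extension $M'$ of $C \setminus M$ still omits some element of $C \cap M$. Here I would exploit the maximality of $M$: since $y \notin M$ and $M$ is maximal, $M \cup \{y\}$ is not a chain, so some $t \in M$ is incomparable with $y$; comparability inside the chain $M$ together with $x < y$ then forces $x < t$, so $t$ lies strictly above $x$ and is incomparable with $y$. Using such a separating element, one arranges the maximal chain through $C \setminus M$ to pass ``on the side of $y$'' and to avoid $x$ (or at least some other point of $C \cap M$); verifying that such a choice is always available is the technical heart of the argument. Once $M'$ is known to omit a point of $C$, neither $M$ nor $M'$ contains $C$, contradicting the covering property, which completes the contrapositive and hence the proposition.
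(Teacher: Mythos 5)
Your forward implication is correct and is exactly the paper's argument: pick any point of $C$, it lies in some member of the covering family, and irreducibility of $C$ finishes the job; as you note, the cardinality of the family plays no role.

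The converse, however, is not complete, and the step you yourself flag as ``the technical heart'' --- arranging the maximal chain $M'$ through $C \setminus M$ so that it omits some point of $C \cap M$ --- is precisely where the argument breaks down, because such an $M'$ need not exist. Consider $P = \{x, y, t\}$ with $x < y$, $x < t$, and $y, t$ incomparable, and let $C = \{x, y\}$. The maximal chain $M = \{x, t\}$ meets $C$ at $x$ without containing it, so $C$ is not irreducible. Yet the only maximal chain containing $C \setminus M = \{y\}$ is $\{x, y\} = C$ itself: your separating element $t$ (which is indeed above $x$ and incomparable with $y$) cannot be used to steer $M'$ away from $x$, since $t$ and $y$ cannot lie in a common chain. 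Consequently every family of maximal chains covering $C$ must include $\{x,y\}$ and therefore has a member containing $C$; the covering condition holds while irreducibility fails, so the converse implication is false as stated and no choice of $M'$ can rescue your contrapositive. You should not read this as a defect of your particular route: the paper's own proof of this direction invokes Zorn's lemma to produce ``some maximal chain $N$ containing $C \cap M^{c}$ and avoiding $x$,'' but since $(C \cap M^{c}) \cup \{x\}$ is itself a chain, nothing guarantees that a maximal extension of $C \cap M^{c}$ avoids $x$, and in the example above every such extension contains $x$. The honest conclusion is that only the forward implication of the proposition survives in the stated generality; the equivalence would need to be reformulated before either your argument or the paper's could be completed.
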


\begin{proof}
Assume that $C$ is irreducible, and let $(M_j)_{j \in J}$ be some family of maximal chains covering $C$. Let $x \in C$. Then $x \in M_{j_0}$ for some $j_0 \in J$, hence $C \cap M_{j_0} \neq \emptyset$. This implies $C \subset M_{j_0}$. 

Conversely, assume that the property given by the proposition is satisfied for some chain $C$, and let $M$ be a maximal chain meeting $C$ at $x$. If $C$ is not contained in $M$, then $C \cap M^{c} \neq \emptyset$. By Zorn's lemma, there exists some maximal chain $N$ containing $C \cap M^{c}$ and avoiding $x$. Then $C \subset M \cup N$, hence $C \subset N$, a contradiction. 
\end{proof}

Here comes the link with connectivities. 

\begin{proposition}
On a poset, the family of irreducible chains is a connectivity, and maximal irreducible chains correspond to connected components. 
\end{proposition}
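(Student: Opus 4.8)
The plan is to verify the two defining properties of a connectivity for the family $\mathrsfs{C}$ of irreducible chains of $(P,\leqslant)$, and then to read off the assertion about connected components from the definitions. First I would check that $\mathrsfs{C}$ is point-connected: a singleton $\{x\}$ is a chain, and if a maximal chain $M$ meets $\{x\}$ then $x \in M$, so $\{x\} \subseteq M$; hence $\{x\}$ is irreducible. In particular $\mathrsfs{C}$ is nonempty and covers $P$.

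The crux is the stability axiom. The observation I would isolate first is that if $C$ is an irreducible chain and $z \in C$, then $C$ is contained in \emph{every} maximal chain through $z$: any such maximal chain $M$ meets $C$ (at $z$), hence contains $C$ by irreducibility, and Zorn's lemma — already used in the previous proposition — guarantees that at least one maximal chain through $z$ exists. Now let $\mathrsfs{A} \subseteq \mathrsfs{C}$ have $\bigcap \mathrsfs{A} \neq \emptyset$, fix $z \in \bigcap \mathrsfs{A}$, and set $D = \bigcup \mathrsfs{A}$. Choosing a maximal chain $M$ through $z$, every member of $\mathrsfs{A}$ lies in $M$ by the observation, so $D \subseteq M$; being a nonempty subset of a chain, $D$ is itself a chain. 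For the irreducibility of $D$, let $N$ be a maximal chain meeting $D$, say at $w$. Then $w$ lies in some $C \in \mathrsfs{A}$, and $C \cap N \ni w$ forces $C \subseteq N$; since $z \in C$, this yields $z \in N$. But then every $C' \in \mathrsfs{A}$ meets $N$ at $z$, so $C' \subseteq N$, whence $D \subseteq N$. Therefore $D$ is an irreducible chain, i.e.\ $D \in \mathrsfs{C}$.

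Finally, the connected components of $(P,\mathrsfs{C})$ are by definition the maximal connected subsets, which here are precisely the maximal irreducible chains, so the last assertion is immediate; it only remains to note that such maximal chains exist and cover $P$, which again follows from Zorn's lemma, the union of an inclusion-chain of irreducible chains sharing a common point being irreducible by the stability axiom just proved.

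I do not anticipate a serious obstacle. The only points deserving a little care are the appeal to Zorn's lemma ensuring that maximal chains through a given point exist (so that the ``every maximal chain through $z$'' step is not vacuous) and the small bookkeeping showing that $\bigcup \mathrsfs{A}$ is genuinely totally ordered and not merely a union of chains that pairwise share a point.
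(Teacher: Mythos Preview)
Your argument is correct and follows essentially the same route as the paper. The only organizational difference is that the paper handles ``$\bigcup \mathrsfs{A}$ is a chain'' and ``$\bigcup \mathrsfs{A}$ is irreducible'' in a single stroke---starting directly with an arbitrary maximal chain $M$ meeting the union and showing the union lies inside $M$---whereas you first pass through the common point $z$ to see the union is a chain and then repeat the containment argument for an arbitrary maximal chain $N$; the underlying idea (pivot from the chain $C$ hit by $M$ to all others via the common intersection) is identical.
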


\begin{proof}
First notice that every singleton is an irreducible chain. 
Let $(C_j)_{j \in J}$ be a family of irreducible chains with nonempty intersection, and let $M$ be a maximal chain meeting $\bigcup_{j\in J} C_j$ (note that such an $M$ always exists). There is some $j_0 \in J$ such that $C_{j_0} \cap M \neq \emptyset$, so that $C_{j_0} \subset M$. Now for all $j \in J$, $\emptyset \neq C_j \cap C_{j_0} \subset C_j \cap M$, which implies $C_j \subset M$. Therefore, $K = \bigcup_{j\in J} C_j \subset M$, which proves that $K$ is a (nonempty) chain and that this chain is irreducible. 
\end{proof}

\subsection{Veins as irreducible convex chains}

A subset $C$ of a poset is \textit{convex} if, for all $x, y \in C$ with $x \leqslant y$, $[x, y] \subset C$, where the interval $[x, y]$ is the set $\{ z : x \leqslant z \leqslant y\}$. Note that an irreducible chain is not necessarily convex. We define a \textit{vein} as an irreducible convex chain. One can see a vein as a ``path'' with no diversion. 

\begin{proposition}
On a poset, the family of veins is a connectivity, and maximal veins correspond to connected components. 
\end{proposition}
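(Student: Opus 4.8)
The plan is to mimic the proof of the previous proposition (the analogous statement for irreducible chains) and then check that convexity behaves well under the union operation. So let $(C_j)_{j \in J}$ be a family of veins with $\bigcap_{j \in J} C_j \neq \emptyset$, and set $K = \bigcup_{j \in J} C_j$. Since each $C_j$ is in particular an irreducible chain, the proof of the preceding proposition already shows that $K$ is an irreducible chain: there is a maximal chain $M$ with $K \subset M$, $K$ is totally ordered, and $K$ is irreducible. It remains only to verify that $K$ is convex.

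For convexity, I would take $x, y \in K$ with $x \leqslant y$ and a point $z$ with $x \leqslant z \leqslant y$; the goal is $z \in K$. Pick indices so that $x \in C_i$ and $y \in C_k$, and fix some $a \in \bigcap_{j} C_j$, so in particular $a \in C_i \cap C_k$. The idea is to route from $x$ to $y$ through $a$: since $C_i$ is a convex chain containing $x$ and $a$, the whole interval between them (in the appropriate order) lies in $C_i \subset K$, and similarly the interval between $a$ and $y$ lies in $C_k \subset K$. Because $K \subset M$ is a chain, $x$, $y$, $z$, $a$ are pairwise comparable, so $z$ lies either between $x$ and $a$ or between $a$ and $y$ (a short case distinction on where $a$ sits relative to $x \leqslant z \leqslant y$); in either case $z$ belongs to one of the two $C_j$'s just used, hence to $K$. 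This shows $K$ is a vein, so the family of veins is a connectivity.

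Finally, for the statement about connected components: the connected components of a connectivity space are by definition the maximal connected subsets, so here they are exactly the maximal veins, which is what is asserted. One should perhaps remark that every singleton is a vein (a one-element chain is trivially convex and irreducible), so the connectivity is point-connected and the covering condition holds; this was implicitly used above when invoking the previous proposition.

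The main obstacle is the convexity verification, and specifically making the ``routing through $a$'' argument clean: one must be careful that $a$ is comparable to $x$, $z$, $y$ (which follows since all four points lie in the chain $K \subset M$) and then organize the case analysis on the relative position of $a$ so that $z$ is captured by one of the convex chains $C_i$ or $C_k$. Everything else is a direct citation of the argument already given for irreducible chains.
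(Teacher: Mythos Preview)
Your overall strategy matches the paper's, but the convexity step has a real gap. You fix a maximal chain $M$ with $K \subset M$ and then assert that ``$x$, $y$, $z$, $a$ are pairwise comparable'' because ``all four points lie in the chain $K \subset M$''. The point $z$ is an arbitrary element of the interval $[x,y]$ in $P$; it need not lie in $K$ (that is precisely what you are trying to prove), and it need not lie in your chosen $M$ either. In the harmless cases $a \leqslant x$ or $y \leqslant a$ your routing works without comparing $z$ and $a$, but when $x \leqslant a \leqslant y$ you genuinely need $z$ and $a$ to be comparable, and nothing you have written gives that.

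The paper closes exactly this gap by reversing the order of choices: instead of taking a maximal chain containing $K$, it takes a maximal chain $M$ containing $\{x, z, y\}$ and then uses the already-established irreducibility of $K$ (since $M$ meets $K$ at $x$) to conclude $K \subset M$. Now the common point $a$ (the paper calls it $t$) lands in $M$ automatically, so $a$ and $z$ are both in the chain $M$ and hence comparable; your case split $z \leqslant a$ or $a \leqslant z$ is then legitimate, and the rest of your argument goes through verbatim. So your proof is essentially the paper's once you make this single adjustment to how $M$ is chosen.
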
 

\begin{proof}
Each singleton is a vein. 
Let $(C_j)_{j \in J}$ be a family of veins with nonempty intersection. We already know that $K = \bigcup_{j\in J} C_j$ is an irreducible chain, let us show that $K$ is convex. So let $x, y \in K$ and $z$ such that $x \leqslant z \leqslant y$. There is some $j_0$ such that $x \in C_{j_0}$ and some $k_0$ with $y \in C_{k_0}$. Take a point $t$ in the intersection of all $C_j$, and let $M$ be a maximal chain containing $\{x, z, y\}$. Since $M$ meets the irreducible chain $K$, it contains $K$. Both $z$ and $t$ are in $M$, so these points are comparable. If $z \leqslant t$, then $z \in [x, t] \subset C_{j_0}$ since $C_{j_0}$ is convex. If $t \leqslant z$, then $z \in [t, y] \subset C_{k_0}$. In either case, $z \in K$, and the convexity of $K$ is proved. 
\end{proof}

\begin{proposition}
Let $P$ be a poset and $Q$ be a subset of $P$. If $V$ is a vein of $P$ meeting $Q$, then $V \cap Q$ is a vein of $Q$. 
\end{proposition}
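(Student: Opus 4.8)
The plan is to verify the three defining properties of a vein for $V \cap Q$ inside the poset $Q$: that it is a nonempty chain, that it is convex in $Q$, and that it is irreducible in $Q$. The first is immediate, since $V$ is a chain and $V \cap Q$ is a nonempty subset of it by hypothesis. For convexity in $Q$, I would take $x, y \in V \cap Q$ with $x \leqslant y$; the interval of $Q$ between them is exactly $[x,y] \cap Q$, where $[x,y]$ is computed in $P$. Since $V$ is convex in $P$ and $x, y \in V$, we have $[x,y] \subset V$, so $[x,y] \cap Q \subset V \cap Q$, which is the desired inclusion.

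The substantive step is irreducibility of $V \cap Q$ as a chain of $Q$. Here I would let $M$ be a maximal chain of $Q$ meeting $V \cap Q$, pick a point $x \in M \cap V \cap Q$, and then pass from $Q$ back to $P$: since $M$ is a chain of $P$, Zorn's lemma gives a maximal chain $\widetilde{M}$ of $P$ with $M \subset \widetilde{M}$. Now $x \in \widetilde{M} \cap V$, so irreducibility of the vein $V$ in $P$ forces $V \subset \widetilde{M}$, and in particular $V \cap Q \subset \widetilde{M} \cap Q$.

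To close the argument I would observe that $\widetilde{M} \cap Q$ is a chain of $Q$ containing $M$, so by maximality of $M$ in $Q$ it must equal $M$; hence $V \cap Q \subset \widetilde{M} \cap Q = M$, which is exactly irreducibility of $V \cap Q$ in $Q$. The main point to be careful about is precisely this interplay between maximal chains of $Q$ and maximal chains of $P$: a maximal chain of $Q$ need not be a maximal chain of $P$, so one cannot invoke irreducibility of $V$ directly, and the extension-then-restriction maneuver (together with the observation that restricting a maximal extension back to $Q$ recovers $M$) is what makes it go through. Everything else is routine, and the use of Zorn's lemma is consistent with its use earlier in the paper.
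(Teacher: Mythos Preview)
Your argument is correct and matches the paper's proof essentially step for step: the paper also extends the maximal chain $M$ of $Q$ to a maximal chain $N$ of $P$, uses irreducibility of $V$ in $P$ to get $V \subset N$, and then observes that $N \cap Q = M$ by maximality of $M$ in $Q$. You spell out the convexity verification in more detail than the paper (which just says ``clearly''), but the substance and structure of the irreducibility argument are identical.
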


\begin{proof}
The set $V \cap Q$ is clearly a nonempty convex chain in $Q$. Assume that $M$ is a maximal chain in $Q$ such that $V \cap Q \cap M \neq \emptyset$. Let $N$ be a maximal chain in $P$ containing $M$. Then $V \cap N \neq \emptyset$, so that $V \subset N$ since $V$ is a vein in $P$. This implies that $V \cap Q \subset N \cap Q$. But $N \cap Q$ is a chain in $Q$ containing $M$, so that $M = N \cap Q$ by maximality of $M$. This proves that $V \cap Q \subset M$, i.e.\ that $V \cap Q$ is a vein in $Q$. 
\end{proof}

\subsection{Pruning of a poset}

A vein is \textit{strict} if it is not a singleton. 
On a poset $P$ we can define a new binary relation $\leqslant_{*}$ by $x \leqslant_{*} y$ if $x = y$ or ($x < y$ and there is some maximal chain in $[x, y]$ that contains no strict vein). We call this relation the \textit{pruning order} of $P$. The \textit{pruning} $P^*$ of $P$ is the set $P$ equipped with the pruning order. The following results will justify this wording. 

\begin{theorem}
On every poset, the pruning order is a partial order. 
\end{theorem}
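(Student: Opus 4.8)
The plan is to check the three axioms of a partial order for $\leqslant_{*}$. Reflexivity is built into the definition, and antisymmetry is immediate: if $x \leqslant_{*} y$ and $y \leqslant_{*} x$ with $x \neq y$, then $x < y$ and $y < x$ would both hold in $P$, contradicting antisymmetry of $\leqslant$. So the whole difficulty is transitivity, and the natural idea is to \emph{glue} witnessing chains.

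Assume $x \leqslant_{*} y$ and $y \leqslant_{*} z$; the cases $x = y$ or $y = z$ are trivial, so suppose $x < y < z$, and fix a maximal chain $M_1$ of $[x,y]$ and a maximal chain $M_2$ of $[y,z]$, each containing no strict vein. First I would record the elementary facts that $x,y \in M_1$ and $y,z \in M_2$ (since $x = \min[x,y]$, $y = \max[x,y]$, and symmetrically for $M_2$), that $[x,y] \cap [y,z] = \{y\}$, and hence that $M := M_1 \cup M_2$ satisfies $M \cap [x,y] = M_1$ and $M \cap [y,z] = M_2$. Next I would check that $M$ is a maximal chain of $[x,z]$: it is a chain because $M_1 \subseteq [x,y]$ and $M_2 \subseteq [y,z]$, so any element of $M_1$ is $\leqslant y \leqslant$ any element of $M_2$; and if a chain $C \subseteq [x,z]$ contains $M$, then any $w \in C$ is comparable with $y \in M$, so $w$ lies in $[x,y]$ or in $[y,z]$, and being comparable with all of $M_1$ (resp.\ all of $M_2$) it must already belong to $M_1$ (resp.\ $M_2$) by maximality, whence $C = M$.

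The crux is then to see that $M$ contains no strict vein. Suppose $V \subseteq M$ is a strict vein. If $V \subseteq M_1$ or $V \subseteq M_2$ we contradict the choice of $M_1$ or $M_2$ outright. Otherwise, since $V \subseteq M_1 \cup M_2$, there are $a \in V \cap (M_1 \setminus M_2)$ and $b \in V \cap (M_2 \setminus M_1)$; as $M_1 \setminus M_2 \subseteq [x,y] \setminus \{y\}$ and $M_2 \setminus M_1 \subseteq [y,z] \setminus \{y\}$ this gives $a < y < b$, so $y \in [a,b] \subseteq V$ by convexity of $V$. Then $V \cap [x,y]$ is a chain containing the two distinct points $a$ and $y$, it is convex as an intersection of convex sets, and it is irreducible as a nonempty subset of the irreducible chain $V$ — hence it is a strict vein, and it is contained in $M \cap [x,y] = M_1$, contradicting the choice of $M_1$. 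Therefore $M$ is a maximal chain of $[x,z]$ with no strict vein, i.e.\ $x \leqslant_{*} z$.

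I expect the only real obstacle to be this last point: ruling out a strict vein that ``straddles'' the junction point $y$. The key mechanism is that convexity of a vein first forces $y$ into $V$ and then forces $V$ to restrict to a strict sub-vein on (at least) one of the two sides, which is exactly where the absence of strict veins in $M_1$ or $M_2$ is violated. Everything else — reflexivity, antisymmetry, and the maximality of the glued chain in $[x,z]$ — is routine bookkeeping.
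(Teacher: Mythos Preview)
Your proof is correct and follows essentially the same approach as the paper: glue the two witnessing maximal chains at $y$, verify maximality of the union in $[x,z]$, and then use convexity of a putative strict vein $V$ to force it to restrict to a strict vein on one side. The only cosmetic difference is that the paper exhibits the contradicting sub-vein as the interval $[v,y]\subset V$, whereas you use $V\cap[x,y]$; both are valid, and your bookkeeping (recording $M\cap[x,y]=M_1$, etc.) is in fact a bit more explicit than the paper's.
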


\begin{proof}
The matter is to show the transitivity of $\leqslant_{*}$. Assume that $x \leqslant_{*} y$ and $y\leqslant_{*} z$. If two points among $x, y, z$ are equal, then $x \leqslant_{*} z$, so consider that $x < y < z$. Let $M$ be a maximal chain in $[x, y]$ containing no strict vein, and define $N \subset [y, z]$ similarly. 
Then $M \cup N$ is a chain, and we show that it is maximal in $[x, z]$. So let $C$ be a chain such that $M \cup N \subset C \subset [x, z]$. Then $C \cap [x, y]$ is a chain in $[x, y]$ containing $M$, hence $M = C \cap [x, y]$ by maximality of $M$. Analogously, $N = C \cap [y, z]$. This gives $M \cup N = C \cap ([x, y] \cup [y, z])$. But since $y \in C$, every $c \in C$ is comparable with $y$, so that $C \subset [x, y] \cup [y, z]$. We get $M \cup N = C$, which proves the maximality of $M \cup N$ in $[x, z]$. 

To finish the proof, we show that $M \cup N$ contains no strict vein. Let $V$ be a vein in $M \cup N$, and suppose that we can find some $v, w \in V$ with $v\neq w$ (for instance $v < w$). If both $v, w$ are in $M$, then $[v, w] \subset V$ by order-convexity of $V$, and $[v, w]$ is a strict vein, necessarily contained in $M$, a contradiction. Thus, we must have $v \in M$ and $w \in N$. This gives $v \leqslant y \leqslant w$. Since $v < w$, we can say, for instance, that $v < y$, so that $[v, y] \subset V$ is a strict vein contained in $M$, a contradiction. 
\end{proof}


%


\begin{lemma}\label{lem:starchain}
Let $P$ be a poset, and let $x, y \in P$ such that $x <_{*} y$. 
If $M$ is a maximal chain in $[x, y]$ containing no strict vein, then $M$ is also a chain with respect to the pruning order. 
\end{lemma}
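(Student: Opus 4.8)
The plan is to show directly that any two elements $a, b$ of $M$ are comparable with respect to $\leqslant_{*}$. Since $M$ is a chain for $\leqslant$, we may assume $a \leqslant b$, and the case $a = b$ is trivial, so assume $a < b$; it then suffices to prove $a \leqslant_{*} b$, i.e.\ to exhibit a maximal chain in $[a, b]$ that contains no strict vein. The natural candidate is $M \cap [a, b]$.

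The first step is to check that $M \cap [a, b]$ really is a \emph{maximal} chain in $[a, b]$. It is clearly a chain contained in $[a, b]$. Given any chain $C$ with $M \cap [a, b] \subset C \subset [a, b]$, I would argue that $C \cup M$ is again a chain: for $c \in C$ and $m \in M$, the element $m$ is comparable both to $a$ and to $b$ (which lie in $M$), and a short case analysis on the position of $m$ relative to $a$ and $b$ --- using $x \leqslant a$, $b \leqslant y$, and the convexity of $[a, b]$ --- shows that $c$ and $m$ are comparable (in the case $a \leqslant m \leqslant b$ one uses $m \in M \cap [a, b] \subset C$). Since $[a, b] \subset [x, y]$, we get $C \cup M \subset [x, y]$, so maximality of $M$ in $[x, y]$ forces $C \cup M = M$, whence $C \subset M \cap [a, b]$ and therefore $C = M \cap [a, b]$. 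This is the one slightly technical point, and it parallels the maximality argument already used in the proof of the preceding theorem.

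The second step is immediate: $M \cap [a, b]$ contains no strict vein, since any strict vein contained in it would in particular be contained in $M$, contradicting the hypothesis on $M$. Combining the two steps, $M \cap [a, b]$ is a maximal chain in $[a, b]$ containing no strict vein, so $a <_{*} b$ by definition of the pruning order, and in particular $a \leqslant_{*} b$. Hence $M$ is totally ordered by $\leqslant_{*}$, which is what we want. I do not anticipate any genuine obstacle; the only point that needs care is the verification that $C \cup M$ is a chain, so that $M \cap [a, b]$ inherits maximality from $M$.
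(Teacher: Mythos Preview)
Your proposal is correct and follows essentially the same approach as the paper's proof: both restrict $M$ to the subinterval $[a,b]$ (the paper writes $[x',y']$), verify maximality there via a case analysis on the position of an element of $M$ relative to $a$ and $b$, and observe that no strict vein can appear since $M$ has none. The only cosmetic difference is that the paper adjoins a single element $c\in C$ to $M$ at a time, whereas you adjoin all of $C$ at once; both arguments work.
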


\begin{proof}
Let $x', y' \in M$. Assume for instance that $x' < y'$, and let us prove that $x' <_{*} y'$. This will be the case if we prove that $M' := M \cap [x', y']$ is a maximal chain in $[x', y']$ (containing no strict vein). Let $C$ be a chain such that $M' \subset C \subset [x', y']$, and let $c \in C$. Then $M \cup \{c\}$ satisfies $M \subset M \cup \{c\} \subset [x, y]$. Also, $M \cup \{c\}$ is a chain: if $z \in M$, then $z$ and $c$ are comparable, for either $z < x'$ (in which case $z < c$), or $z > y'$ (in which case $z > c$), or $z \in [x', y']$ (in which case $z \in M'$, hence $z \in C$, and $z$ and $c$ are again comparable as elements of the chain $C$). By maximality of $M$, $M = M \cup \{c \}$, i.e.\ $c \in M$, so that $c \in M \cap [x', y'] = M'$. 
This means that $M' = C$, i.e.\ $M'$ is a maximal chain in $[x', y']$, hence $x' <_{*} y'$. This shows that $M$ is a chain with respect to $\leqslant_{*}$. 
\end{proof}

In a poset $P$, we classically write $x \prec y$ whenever $y$ \textit{covers} $x$, which means that $x < y$ and $[x, y] = \{x, y\}$. 

\begin{lemma}\label{lem:prec}
Let $P$ be a poset, and let $x, y \in P$ such that $x <_{*} y$. If an element $c \in [x, y]$ satisfies $x \prec c$ (resp.\ $c \prec y$), then $x <_{*} c$ (resp.\ $c <_{*} y$). 
\end{lemma}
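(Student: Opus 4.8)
The plan is to unwind the definition of $\leqslant_{*}$ and exploit that $x \prec c$ makes the interval $[x,c]$ tiny. Since $x \prec c$, we have $[x,c] = \{x,c\}$, so $\{x,c\}$ is the unique maximal chain of $[x,c]$; hence, by definition of the pruning order, $x <_{*} c$ holds if and only if $\{x,c\}$ contains no strict vein, i.e.\ if and only if $\{x,c\}$ itself is not a strict vein. Symmetrically, $c \prec y$ gives $[c,y] = \{c,y\}$, and $c <_{*} y$ holds if and only if $\{c,y\}$ is not a strict vein. So the whole statement reduces to: \emph{if $x <_{*} y$ and $x \prec c \leqslant y$ (resp.\ $x \leqslant c \prec y$), then $\{x,c\}$ (resp.\ $\{c,y\}$) is not a strict vein.}

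Next I would set up the right maximal chain. From $x <_{*} y$, fix a maximal chain $M$ of $[x,y]$ that contains no strict vein. Since $x$ and $y$ are comparable with every element of $[x,y]$, maximality of $M$ forces $x \in M$ and $y \in M$. By Zorn's lemma, extend $M$ to a maximal chain $\widetilde{M}$ of $P$. Then $\widetilde{M} \cap [x,y]$ is a chain of $[x,y]$ containing $M$, so by maximality of $M$ we have $M = \widetilde{M} \cap [x,y]$.

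Now argue by contradiction. Suppose $\{x,c\}$ is a strict vein (I treat the case $x \prec c \leqslant y$; the case $x \leqslant c \prec y$ is symmetric, exchanging the roles of $x$ and $y$, and could also be obtained by passing to the opposite poset). Since $x \in M \subset \widetilde{M}$, the vein $\{x,c\}$ meets the maximal chain $\widetilde{M}$, hence $\{x,c\} \subset \widetilde{M}$ because veins are irreducible chains; in particular $c \in \widetilde{M}$. But $c \in [x,y]$, so $c \in \widetilde{M} \cap [x,y] = M$, and therefore $\{x,c\} \subset M$ is a strict vein contained in $M$ --- contradicting the choice of $M$. Hence $\{x,c\}$ is not a strict vein, which gives $x <_{*} c$.

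The only subtle point is that $M$, being maximal merely \emph{within} $[x,y]$, need not be a maximal chain of $P$; this is exactly why one passes to $\widetilde{M}$ and uses the identity $M = \widetilde{M} \cap [x,y]$. The rest is a direct application of the definitions of vein, irreducible chain, and of the pruning order, so I do not expect any further obstacle.
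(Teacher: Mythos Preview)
Your proof is correct and follows essentially the same route as the paper: both arguments reduce to showing that $\{x,c\}$ is not a strict vein, pick a vein-free maximal chain $M$ of $[x,y]$, extend it to a maximal chain of $P$, and use irreducibility of the putative vein $\{x,c\}$ to force $c\in M$, yielding the contradiction. Your presentation is in fact slightly cleaner---you obtain $c\in M$ via the identity $M=\widetilde{M}\cap[x,y]$ and conclude directly, whereas the paper routes the final step through the previous lemma before reaching the same contradiction.
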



\begin{proof}
Assume for instance that $c \in [x, y]$ is such that $x \prec c$ (the case $c \prec y$ is similar). 
Note that $C = \{ x, c \}$ is a convex chain. If $C$ is not a vein, then $C$ is a maximal chain in $[x, c]$ containing no strict vein, so that $x <_{*} c$. Suppose on the contrary that $C$ is a vein. 
Since $x <_{*} y$, there is some maximal chain $M$ in $[x, y]$ containing no strict vein. Let $N$ be a maximal chain in $P$ containing $M$. 
Since $x \in C \cap N \neq \emptyset$, we deduce that $C \subset N$, i.e.\ $c \in N$. Thus, $M \cup \{c \}$ is a subchain of $[x, y]$ containing $M$, so that $c \in M$ by maximality of $M$. Then Lemma~\ref{lem:starchain} implies $x <_{*} c$, i.e.\ $C$ is not a vein, a contradiction. 
\end{proof}

\begin{theorem}
Let $P$ be a poset in which every bounded chain is finite. Then $(P^*)^* = P^*$. 
\end{theorem}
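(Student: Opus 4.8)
The plan is to show that pruning a second time changes nothing, i.e.\ that $x \leqslant_* y$ if and only if $x \leqslant_{**} y$, where $\leqslant_{**}$ denotes the pruning order of $P^*$. Since pruning only removes relations — if $x \leqslant_{**} y$ then in particular $x \leqslant_* y$, because $\leqslant_{**}$ is defined using intervals and maximal chains \emph{inside} $(P^*, \leqslant_*)$, and any chain of $P^*$ is a chain of $P$ — one inclusion is essentially formal. The real content is the converse: if $x <_* y$, then $x <_{**} y$, that is, there is a maximal chain in the $\leqslant_*$-interval $[x,y]_*$ that contains no strict vein of $P^*$.

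First I would fix $x <_* y$ and choose, by definition, a maximal chain $M$ in the $\leqslant$-interval $[x,y]$ containing no strict vein of $P$. By Lemma~\ref{lem:starchain}, $M$ is also a $\leqslant_*$-chain, and I would next argue that $M$ is in fact a \emph{maximal} $\leqslant_*$-chain inside $[x,y]_*$: any $c$ with $x \leqslant_* c \leqslant_* y$ satisfies $x \leqslant c \leqslant y$, so $M \cup \{c\}$ is a $\leqslant$-chain in $[x,y]$ containing $M$, forcing $c \in M$. So $M$ serves as the candidate maximal chain for $\leqslant_{**}$; it remains to show $M$ contains no strict vein \emph{of $P^*$}.

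This last point is where I expect the finiteness hypothesis to enter, and it is the main obstacle. Suppose $V \subset M$ were a strict vein of $P^*$, so I can pick $v <_* w$ in $V$ with, by convexity of $V$ in $P^*$, the whole $\leqslant_*$-interval $[v,w]_*$ contained in $V \subset M$. Because every bounded chain of $P$ is finite, the chain $M \cap [v,w]$ is finite, so I can find a covering step $v \prec c$ with $c \in M \cap [v,w]$; by Lemma~\ref{lem:prec} (applied with $v <_* w$ in place of $x <_* y$) we get $v <_* c$, and since $[v,w]_* = V \cap M$ one checks $c$ lies in it, so $c \in V$. Iterating this covering-step argument along the finite chain $M \cap [v,w]$ — each successive cover is again an $\leqslant_*$-relation by Lemma~\ref{lem:prec}, and each intermediate point lies in the convex set $V$ — shows that $\{v = c_0 \prec c_1 \prec \dots \prec c_n = w\}$ is a maximal chain of $[v,w]$ (in $P$) sitting inside $V$, hence inside $M$.

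Now I would derive the contradiction: this finite saturated chain $\{c_0, \dots, c_n\}$ is a $\leqslant$-chain contained in $M$, and I claim it is a vein of $P$ — it is convex in $P$ since consecutive elements are covers, and it is irreducible because $V$ is irreducible in $P^*$, which I would unwind to irreducibility in $P$ using that every maximal chain of $P$ meeting it, intersected down, gives a $\leqslant_*$-chain. Being strict ($v \neq w$), this is a strict vein of $P$ contained in $M$, contradicting the choice of $M$. The step I flag as delicate is precisely the translation ``strict vein of $P^*$ inside $M$ $\Rightarrow$ strict vein of $P$ inside $M$'': convexity must be transported from $\leqslant_*$ to $\leqslant$ via the covering-step chain (this is exactly why finiteness of bounded chains is assumed), and irreducibility in $P^*$ must be related back to irreducibility in $P$, for which the argument pattern of Lemma~\ref{lem:prec} — lifting a $\leqslant_*$-maximal chain to a $\leqslant$-maximal chain of $P$ and using that veins are caught entirely — is the right tool.
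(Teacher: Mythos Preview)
Your overall architecture matches the paper's: fix $M$, argue it is a maximal $*$-chain in $[x,y]_*$, and show that any strict $*$-vein $V \subset M$ would have to be (or contain) a strict vein of $P$, contradicting the choice of $M$. But the step where you claim the saturated chain $\{v = c_0 \prec c_1 \prec \dots \prec c_n = w\}$ is \emph{convex in $P$} ``since consecutive elements are covers'' is simply false: a maximal chain in an interval need not be convex (take the diamond $\{0,a,b,1\}$ with $0 \prec a \prec 1$ and $0 \prec b \prec 1$; the chain $\{0,a,1\}$ is saturated but $b \in [0,1]$). Your covering-step induction only shows $M \cap [v,w] \subset V$, not $[v,w] \subset V$, because you used the fact that each $c_i$ already sits in the $*$-chain $M$ to get $c_i \leqslant_* w$. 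The paper avoids this by first reducing $V$ to a two-element $*$-vein $\{a,b\}$ (any two consecutive elements of a finite $*$-vein form a $*$-vein) and then running the covering-step induction along an \emph{arbitrary} maximal chain $N$ in $[a,b]$: the key move you are missing is that to get $n_1 \leqslant_* b$ one extends $\{a,n_1\}$ to a maximal $*$-chain and uses that $V$, being a $*$-vein, is swallowed by it, forcing $b$ and $n_1$ to be $*$-comparable. This yields $[a,b] \subset V = \{a,b\}$, i.e.\ genuine convexity ($a \prec b$).

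Your irreducibility sketch also has a gap. The suggestion that a maximal chain $M'$ of $P$ meeting $V$ ``intersected down, gives a $\leqslant_*$-chain'' is not correct: maximal chains of $P$ are typically not $*$-chains, which is exactly why the problem is subtle. The paper's argument here is a genuine case split: pick $\beta \in M'$ with $a \prec \beta$; either $\{a,\beta\}$ is itself a vein of $P$ (then any maximal chain through $\{a,b\}$ must contain $\beta$, so $\beta$ and $b$ are comparable), or it is not (then $a <_* \beta$, and now one can run the $*$-vein argument to force $b$ and $\beta$ to be $*$-comparable). In both cases comparability plus $a \prec b$, $a \prec \beta$ forces $\beta = b \in M'$. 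The pattern of Lemma~\ref{lem:prec} is indeed lurking here, but you have not isolated the dichotomy that makes it go through.
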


\begin{proof}
We use the terms and notations $*$-chain, $*$-vein, $[x, y]_{*}$, etc.\ with obvious definitions. 
Assume that $x <_{*} y$, for some $x, y  \in P$. We want to show that $x <_{**} y$. 
By definition of $<_{*}$, there exists some maximal chain $M$ in $[x, y]$ containing no strict vein. 
By Lemma~\ref{lem:starchain}, $M$ is a (maximal) $*$-chain (in $[x, y]_{*}$). 
To conclude that $x <_{**} y$, it remains to show that $M$ contains no strict $*$-vein. Suppose on the contrary that there is some strict $*$-vein $V$ contained in $M$. With the assumption that every bounded chain in $P$ is finite, we may suppose that $V$ is a two-element $*$-vein, i.e.\ $V = \{ a, b\}$ with $a <_{*} b$. 

Let us show that $V$ is convex. So let 
$c \in P$ such that $a \leqslant c \leqslant b$. 
Since $a <_{*} b$, there is some maximal chain $N$ in $[a, b]$ containing $c$. 
We assumed that every bounded chain in $P$ is finite, so 
we can write $N$ as $a = n_0 < n_1 < \ldots < n_m = b$, and $n_k = c$ for some $c$. By maximality of $N$, we have $n_0 \prec n_1$, so that $n_0 <_{*} n_1$ by Lemma~\ref{lem:prec}. If $N^*$ is a maximal $*$-chain containing $\{n_0, n_1\}$, then $a \in V \cap N^* \neq \emptyset$, so $V \subset N^*$ since $V$ is a $*$-vein. This implies that $b \in N^*$, so either $b \leqslant_{*} n_1$ or $n_1 \leqslant_{*} b$. But we also know that $n_1 \leqslant b$, so $n_1 \leqslant_{*} b$. We see that $n_1 \in [a, b]_{*}$; since $V$ is $*$-convex, this proves that $n_1 \in V$. We deduce by induction that $n_j \in V$ for all $j$, so in particular $c \in V$, and we have shown that $V$ is convex. 

Now let us show that $V$ is irreducible. So let $M'$ be a maximal chain in $P$ such that $V \cap M' \neq \emptyset$. We want to show that $V \subset M'$. 
We may suppose, without loss of generality, that $a \in V \cap M'$. 
The hypothesis made on $P$ implies the existence of some $\beta \in M'$ such that $a \prec \beta$. 
Then $M'' = \{ a, \beta \}$ is a maximal chain in $[a, \beta]$. 

First case: $M''$ contains a strict vein. Then $M''$ is itself a vein. If $N''$ is a maximal chain containing $\{a, b\}$, then $a \in M'' \cap N'' \neq \emptyset$, so that $M'' \subset N''$. Thus, $\beta \in N''$, so $\beta$ and $b$ are comparable. 

Second case: $M''$ contains no strict vein. 
Then $a <_{*} \beta$. Now if $N^*$ is a maximal $*$-chain containing $\{a, \beta \}$, then $V \cap N^* \neq \emptyset$. Since $V$ is a $*$-vein, this implies $V \subset N^*$, so $b \leqslant_{*} \beta$ or $\beta \leqslant_{*} b$. Again, $\beta$ and $b$ are comparable. 

Since $V$ is convex and $a \prec \beta$, both cases imply that $b = \beta$. 
So we have $b \in M'$, i.e.\ $V \subset M'$, which shows that $V$ is irreducible. 

We have proved that $V$ is a strict vein contained in $M$, a contradiction. So $M$ contains no strict $*$-vein, and $x <_{**} y$. 

Conversely, if $x <_{**} y$, then $x <_{*} y$ is obvious, so we have proved that $x \leqslant_{**} y \Leftrightarrow x \leqslant_{*} y$ for all $x, y \in P$, i.e.\ $(P^*)^* = P^*$. 
\end{proof}

\begin{remark}
If $P$ contains an infinite chain, we may have $(P^*)^* \neq P^*$. Consider for instance $P = [0, 1] \cup \{\omega\}$, where $\omega$ is an additional element such that $0 < \omega < 1$. Then, in $P^*$, no relation holds but $0 <_{*} \omega <_{*} 1$ and, in $(P^*)^*$, no elements are comparable. 
\end{remark}

\begin{problem}
Is it true that $((P^*)^*)^* = (P^*)^*$ for every poset $P$?
\end{problem}

In a poset, an element $x$ is \textit{irreducible} if $x$ is a maximal element or $\uparrow\!\! x \setminus \{x\}$ is a filter, \textit{coirreducible} if it is irreducible in $P^{\op}$, and doubly-irreducible if it is both irreducible and coirreducible. 
Remark that if $P$ is conditionally complete, then $x$ is irreducible if and only if $x = a \wedge b$ implies $x \in \{a, b\}$, for all $a, b$, where $a \wedge b$ denotes the infimum of $\{a, b\}$. 

\begin{figure}
	\centering
		\includegraphics[width=0.9\textwidth]{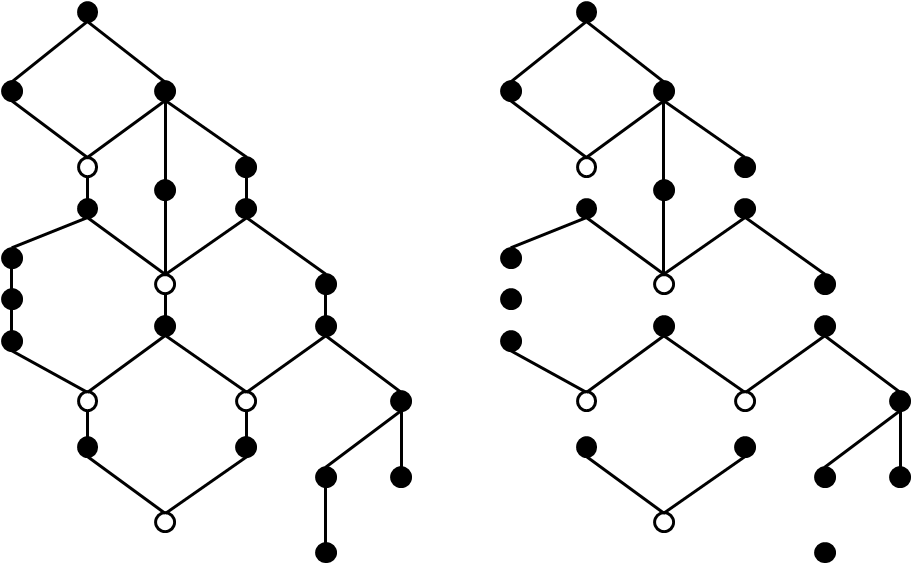}
	\caption{ On the left, a poset $P$ with nineteen irreducible elements (in black); on the right, the pruned poset $P^*$ has 
	the same irreducible elements as $P$. }
	\label{fig:AppendixPrunedPoset}
\end{figure}


\begin{proposition}
Let $P$ be a finite conditionally complete poset and $x \in P$. Then 
\begin{itemize}
	\item $x$ is irreducible in $P$ if and only if $x$ is irreducible in $P^*$, 
	\item $x$ is coirreducible in $P$ if and only if $x$ is coirreducible in $P^*$.  
\end{itemize}
\end{proposition}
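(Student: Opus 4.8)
The plan is to replace the order-theoretic definition of (co)irreducibility by the combinatorial one in terms of covers, and then to describe exactly how the covering relation of $P$ is modified by pruning. First I recall the elementary fact that in any \emph{finite} poset an element $z$ is irreducible if and only if it has at most one upper cover, and coirreducible if and only if it has at most one lower cover: when $z$ is not maximal, $\uparrow\! z \setminus \{z\}$ is a nonempty finite up-set, hence a filter precisely when it is down-directed, precisely when it has a least element, and that least element is then the unique upper cover of $z$ (in a finite poset every element above $z$ dominates some upper cover of $z$). Thus it suffices to prove that $x$ has at most one upper cover in $P$ if and only if it has at most one upper cover in $P^*$, together with the dual statement about lower covers. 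I write $x \prec y$ for the covering relation of $P$ and $x \prec_{*} y$ for that of $P^*$, and keep in mind that $x \leqslant_{*} y$ implies $x \leqslant y$.

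I then isolate two lemmas. \textbf{Lemma A} (covers of $P^*$ versus covers of $P$): $x \prec_{*} y$ if and only if $x \prec y$ in $P$ and $\{x,y\}$ is not a vein. The ``if'' part is immediate from the definition of $\leqslant_{*}$: if $x \prec y$ then $[x,y] = \{x,y\}$, whose only maximal chain is $\{x,y\}$, and this chain contains no strict vein exactly when $\{x,y\}$ is not a vein, so $x <_{*} y$; there is then no $z$ with $x <_{*} z <_{*} y$, since that would force $x < z < y$. For the ``only if'' part I would use that $x <_{*} y$ supplies a maximal chain $M$ in $[x,y]$ containing no strict vein; since $P$ is finite, $M$ is saturated and runs from $x$ to $y$, so the element $c$ covering $x$ inside $M$ satisfies $x \prec c$ and $c \in [x,y]$; by Lemma~\ref{lem:starchain}, $M$ is also a chain for $\leqslant_{*}$, hence $x <_{*} c \leqslant_{*} y$, and $x \prec_{*} y$ forces $c = y$; therefore $x \prec y$, and $M = \{x,y\}$ having no strict vein means $\{x,y\}$ is not a vein. \textbf{Lemma B} (a vein-cover is the unique cover): if $x \prec c$ in $P$ and $\{x,c\}$ is a vein, then $c$ is the only upper cover of $x$ in $P$; for if $c'$ were another, I extend $\{x,c'\}$ to a maximal chain $M'$ of $P$, note that $\{x,c\} \cap M' \ni x$, conclude $\{x,c\} \subseteq M'$ by irreducibility of the vein, so $c \in M'$, whence $c$ and $c'$ are comparable, and $x \prec c$ together with $x \prec c'$ forces $c = c'$.

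To finish, I split into two cases for a fixed $x \in P$. If some upper cover $c$ of $x$ has $\{x,c\}$ a vein, then by Lemma~B it is the only upper cover of $x$ in $P$, so $x$ is irreducible in $P$; moreover, for every $y$ with $x < y$, each maximal chain of $[x,y]$ is saturated and so passes through this unique upper cover $c$, hence contains the strict vein $\{x,c\}$, so no maximal chain of $[x,y]$ avoids strict veins and $x \not<_{*} y$; thus $x$ is maximal, hence irreducible, in $P^*$. In the opposite case no upper cover of $x$ forms a vein with $x$ (this covers in particular $x$ maximal in $P$), and Lemma~A then gives that $x$ has exactly the same upper covers in $P^*$ as in $P$; hence ``at most one upper cover'' is preserved. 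Either way $x$ is irreducible in $P$ if and only if in $P^*$, which is the first bullet.

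For the second bullet, duality does the work: convexity, maximal chains, and the defining condition of an irreducible chain are all self-dual, so veins of $P^{\op}$ coincide with veins of $P$, and since $[x,y]$ in $P$ equals $[y,x]$ in $P^{\op}$ one checks that $(P^{\op})^* = (P^*)^{\op}$; applying the first bullet to the (finite, conditionally complete) poset $P^{\op}$ then gives that $x$ is coirreducible in $P$ iff $x$ is irreducible in $P^{\op}$ iff $x$ is irreducible in $(P^{\op})^* = (P^*)^{\op}$ iff $x$ is coirreducible in $P^*$. I expect the one delicate step to be the ``only if'' half of Lemma~A --- recovering a cover of $P$ from a cover of $P^*$ --- since this is where finiteness (saturation of maximal chains in intervals) and Lemma~\ref{lem:starchain} are genuinely needed, and it is precisely what makes the dichotomy in the case analysis work.
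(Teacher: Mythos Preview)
Your argument is correct and proceeds along a genuinely different line from the paper's. The paper works directly with the meet/filter characterization of irreducibility: for the forward direction it picks $a,b$ covering $x$ with $x=a\wedge b$, checks that $\{x,a\}$ and $\{x,b\}$ are not veins (essentially the contrapositive of your Lemma~B), and then argues that $x$ is the $*$-infimum of $\{a,b\}$; for the converse it shows that whenever $x<_{*}a$ and $x<_{*}b$, the unique upper cover $c$ of $x$ satisfies $c\leqslant_{*}a$ and $c\leqslant_{*}b$, by stripping $x$ off a vein-free maximal chain in $[x,a]$. You instead reduce everything to the cover description of irreducibility and prove the structural Lemma~A identifying $\prec_{*}$ with ``$\prec$ and not a vein'', together with Lemma~B; the dichotomy ``some cover of $x$ forms a vein / none does'' then dispatches both implications at once. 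Your route yields a clean description of the covering relation of $P^{*}$ that the paper never states explicitly, and in fact your proof never invokes conditional completeness (only finiteness, for saturation of maximal chains), so you establish a slightly stronger statement than the proposition as phrased. The paper's approach, on the other hand, makes more transparent why the filter structure of $\uparrow\! x\setminus\{x\}$ is inherited by $P^{*}$, at the cost of needing $a\wedge b$ to exist. Your duality argument for coirreducibility via $(P^{\op})^{*}=(P^{*})^{\op}$ is correct and matches what the paper leaves implicit.
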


\begin{proof}
Let $x \in P$ and assume that $x$ is not irreducible. Then there are $a, b$ such that $x = a \wedge b$ and $x \notin \{a, b\}$, and we can assume that $x \prec a$ and $x \prec b$ since $P$ is finite. Then $\{x, a \}$ is a maximal chain in $[x, a]$. Moreover, it contains no strict vein: if $V$ is a strict vein included in $\{x, a\}$, then $V = \{x, a\}$; but if $M$ is a maximal chain containing $\{x, b\}$, then $V \cap M = \{x\} \neq \emptyset$, while $V \not\subset M$. Hence $x \leqslant_{*} a$, and symmetrically $x \leqslant_{*} b$. Moreover, if $u$ satisfies $u \leqslant_{*} a$ and $u \leqslant_{*} b$, then $u \leqslant a$ and $u \leqslant b$, so that $u \leqslant a \wedge b = x$. This shows that $x$ is the infimum in $P^*$ of $\{a, b\}$, so $x$ is not irreducible in $P^*$. 

Conversely, let $x$ be irreducible in $P$, and let us show that $x$ is irreducible in $P^*$. So let $a, b$ such that $x <_{*} a$ and $x <_{*} b$. This implies that $x \leqslant a\wedge b$, and even $x  < a\wedge b$ since $x$ is irreducible in $P$. 
Let $c$ such that $x \prec c \leqslant a \wedge b$. We show that $c \leqslant_{*} a$. Since $x <_{*} a$, there is a maximal chain $M$ in $[x, a]$ containing no strict vein. Considering that $x \prec c$, we see that $M \setminus \{x\}$ is a maximal chain in $[c, a]$ containing no strict vein. Consequently, $c \leqslant_{*} a$. Similarly, $c \leqslant_{*} b$. 
This proves that the subset $\{ a \in P : x <_{*} a \}$ is either empty or filtered, i.e.\ that $x$ is irreducible in $P^*$. 
\end{proof}



%



\section{Conclusion and perspectives}

A future work may consist in finding an algorithm to efficiently prune a given poset.

\bibliographystyle{plain}

\def\cprime{$'$} \def\cprime{$'$} \def\cprime{$'$} \def\cprime{$'$}
  \def\ocirc#1{\ifmmode\setbox0=\hbox{$#1$}\dimen0=\ht0 \advance\dimen0
  by1pt\rlap{\hbox to\wd0{\hss\raise\dimen0
  \hbox{\hskip.2em$\scriptscriptstyle\circ$}\hss}}#1\else {\accent"17 #1}\fi}
  \def\ocirc#1{\ifmmode\setbox0=\hbox{$#1$}\dimen0=\ht0 \advance\dimen0
  by1pt\rlap{\hbox to\wd0{\hss\raise\dimen0
  \hbox{\hskip.2em$\scriptscriptstyle\circ$}\hss}}#1\else {\accent"17 #1}\fi}






\end{document}